\newtheorem{theorem}{Theorem}
\newtheorem{remark}{Remark}
\newcommand{\R}{\mathbb{R}}
\begin{document}

\title{
Hamel equations and quasivelocities for\\ nonholonomic systems with
inequality constraints}

\author{Alexandre Anahory Simoes and Leonardo Colombo 
\thanks{A. Anahory Simoes (alexandre.anahory@ie.edu) is with the School of Science and Technology, IE University, Spain.}
\thanks{L. Colombo (leonardo.colombo@car.upm-csic.es) is with Centre for Automation and Robotics (CSIC-UPM), Ctra. M300 Campo Real, Km 0,200, Arganda
del Rey - 28500 Madrid, Spain.}  \thanks{The authors acknowledge financial support from Grant PID2019-106715GB-C21 funded by MCIN/AEI/ 10.13039/501100011033.}}%

\maketitle

\begin{abstract}
In this paper we derive Hamel equations for the motion of nonholonomic systems subject to inequality constraints in quasivelocities. As examples, the vertical rolling disk hitting a wall  and the Chaplygin sleigh with a knife edge constraint hitting a circular table are shown to illustrate the theoretical results.
\end{abstract}


\section{Introduction}
Quasivelocities are the components of velocities, describing
a mechanical system, relative to a set of vector fields
(in principle, local) that span at each point the fiber of the
tangent bundle of the configuration space. The main point is
that these vector fields need not be associated with (local)
configuration coordinates on the configuration space. One
of the reasons for using quasivelocities is that the Euler-
Lagrange equations written in generalized coordinates are not
always effective for analyzing the dynamics of a mechanical
system of interest as it was shown in \cite{quasi}.

Some mechanical systems have a restriction either on the configurations that the system may assume or at the velocities the system is allowed to go. Systems with such restrictions are generally called constrained systems. Nonholonomic  systems  \cite{Bloch,generalized,dLMdD1996,Neimark} are, roughly speaking, mechanical systems with constraints on their velocity that are not derivable from position constraints. They arise, for instance, in mechanical systems that have rolling contact (e.g., the rolling of wheels without slipping) or certain kinds of sliding contact (such as the sliding of skates).  

 We will restrict ourselves to the case of linear constraints on the velocities, where the velocity lies in a subspace of the tangent space. The collection of this subspaces forms a distribution, denoted by $\mathcal{D}$, and is locally given by an expression of the type $\mu^{a}_{i}\dot{q}^{i}=0$. The nonholonomic equations of motion are obtained from the Lagrange-d'Alembert principle \cite{Bloch} and its local expression is
\begin{equation*}
    \begin{split}
        & \frac{d}{dt}\frac{\partial L}{\partial \dot{q}^{i}} - \frac{\partial L}{\partial q^{i}}= \lambda_{a}\mu^{a}_{i}, \\
        & \dot{q} \in \mathcal{D}_{q(t)}
    \end{split}
\end{equation*}
where $\lambda_{a}$ is a Lagrange multiplier that might be computed using the constraints.

 
Mechanical systems subject to inequality constraints are confined within a region of space with boundary. Collision with the boundary activates constraint forces forbiding the system to cross the boundary into a non-admissible region of space (see, e.g., \cite{Fetecau, kaufman}). Inequality constraints appear, for instance, in the problem of rigid-body collisions, mechanical grasping models and biomechanical locomotion \cite{simoes,lopez}. Mechanical systems with impulsive effects on quasivelocities has been studied in \cite{leohybrid}. The dynamics for systems modeled by using quasivelocities
is governed by Hamel equations \cite{quasi}, \cite{Bloch}. In this paper, we
introduce Hamel equations for nonholonomic systems subject to inequality constraints building on the work \cite{quasi}.
A first approach to the dynamics of nonholonomic systems with inequality constraints was given in \cite{Clark}, using Weierstrass-Erdemann conditions to obtain the state of the system immediately after the collision, and later in \cite{alex}, where the authors use a variational principle to obtain both the equations of motion and the collision equations. In this paper, we go a step further and consider the quasivelocities description of nonholonomic systems and the corresponding Hamel equations generalizing the results of \cite{alex} to an adated basis of vector fields to the nonholonomic distribution defining the nonholonomic constraints.

The remainder of the paper is structured as folows. In Section \ref{sec2} we introduce mechanical systems with inequality constraints and the main notation used along the paper.  Section \ref{sec3} is devoted to study Hamel equations for systems with inequality constraints. We extend the analyis to nonholonomic systems with inequality constraints in Section \ref{sec4}. As examples, in Sections \ref{sec5} and \ref{sec6}  the vertical rolling disk hitting a wall  and the Chaplygin sleigh with a knife edge constraint hitting a circular table, respectively, are shown to illustrate the theoretical results.

\section{Mechanical systems with inequality constraints}\label{sec2}

Suppose $Q$ is a differentiable manifold of dimension $n$. Throughout the text, $q^{i}$ will denote a particular choice of local coordinates on this manifold and $TQ$ denotes its tangent bundle, with $T_{q}Q$ denoting the tangent space at a specific point $q\in Q$ generated by the coordinate vectors $\frac{\partial}{\partial q^{i}}$. Usually $v_{q}$ denotes a vector at $T_{q}Q$ and, in addition, the coordinate chart $q^{i}$ induces a natural coordinate chart on $TQ$ denoted by $(q^{i},\dot{q}^{i})$. There is a canonical projection $\tau_{Q}:TQ \rightarrow Q$, sending each vector $v_{q}$ to the corresponding base point $q$. Note that in coordinates $\tau_{Q}(q^{i},\dot{q}^{i})=q^{i}$. 

 The vertical lift of a vector field $X\in \mathfrak{X}(Q)$ to $TQ$ is defined by $$X_{v_{q}}^{V}=\left. \frac{d}{dt}\right|_{t=0} (v_{q} + t X(q)).$$

$T_{q}Q$ has a vector space structure, so we may consider its dual space, $T^{*}_{q}Q$ and define the cotangent bundle as $\displaystyle{T^{*}Q:=\bigcup_{q\in Q}T^{*}_{q}Q},$ with local coordinates $(q^i,p_i)$. 

In this paper, we will analyse the dynamics of nonholonomic systems evolving on the configuration manifold $Q$ which are subjected to inequality constraints, i.e., constraints determined by a submanifold with boundary $C$ of the manifold $Q$. The boundary $\partial C$ is a smooth manifold of $Q$ with codimension $1$. Locally, the boundary $\partial C$ is a smooth manifold of the type
$\partial C = \{ q\in Q \ | \ g(q) = 0 \}$
and the manifold $C$ is
$C = \{ q\in Q \ | \ g(q) \leqslant 0 \}$
for some smooth function $g:Q \rightarrow \mathbb{R}$.

In convex geometry, given a closed convex set $K$ of $\mathbb{R}^{n}$, the \textit{polar cone} of $K$ is the set
$K^{p} = \{z\in \mathbb{R}^{n} \ | \ \langle z, y \rangle \leqslant 0, \forall y \in K \}$ (see \cite{brogliato} for instance).
The \textit{normal cone} to $K$ at a point $x\in K$ is given by
$N_{K}(x) = K^{p} \cap \{x\}^{T}$, 
where $\{x\}^{T}$ is the orthogonal subspace to $x$ with respect to the Euclidean inner product. Based on this construction, we will only use a minimal definition of normal cone suiting the kind of inequality constraints we will be dealing with. Given a submanifold with boundary $C$ as before, the normal cone to a point $q\in \partial C$ is the set
$N_{C}(q)=\{ \lambda dg(q) | \lambda \geqslant 0\}$.
The two definitions match, if $C$ is a closed convex set of $\R^{n}$ with boundary being a hypersurface of dimension $n-1$.


Given a Lagrangian function $L:TQ\to\mathbb{R}$ describing the dynamics of a mechanical system, with local coordinates $(q^i,\dot{q}^i)$, $i=1,\ldots,n=\dim Q$, the equations of motion under the presence of  inequality constraints are given by Euler-Lagrange equations
$$\displaystyle{\frac{d}{dt}\frac{\partial L}{\partial \dot{q}^{i}} - \frac{\partial L}{\partial q^{i}}= 0}$$
whenever the trajectory is in the interior of the constraint submanifold $C\setminus \partial C$. At impact times $t_{i}\in \mathbb{R}$ of the trajectory with the boundary $q(t_{i})\in \partial C$, there is a discontinuity in the state variables of the system, often called a jump. This jump is determined by the equations:
\begin{equation}\label{jump:equations}
    \begin{split}
        & \frac{\partial L}{\partial \dot{q}}|_{t=t_{i}^{+}} - \frac{\partial L}{\partial \dot{q}}|_{t=t_{i}^{-}} \in -N_{C}, \,\,\, E_{L}|_{t=t_{i}^{+}} = E_{L}|_{t=t_{i}^{-}}.
    \end{split}
\end{equation}

\begin{remark}
    We note that a negative sign in the previous equation appears as a consequence of the non-interpenetrability of the constraint.i.e., the mechanical system may not cross the boundary of the admissible variational constraint. We will see exactly how the negative signs appears in the following section.
\end{remark}

Throuhgout the paper, $L$ will be a regular mechanical Lagrangian, i.e., it has the form kinetic minus potential energy \cite{Bloch} and the Legendre transform $\mathbb{F}L:TQ\rightarrow T^{*}Q$ is a local diffeomorphism.

	\section{Hamel's equations for systems with inequality constraints}\label{sec3}
\subsection{Hamel's equations}

In this section we briefly discuss the Hamel equations. The exposition follows paper \cite{quasi}.

In many cases the Lagrangian and the equations of motion of a mechanical system have a
simpler structure when these are written using velocity components measured against a frame that is unrelated to system's local configuration coordinates. Let $q = (q^1,\dots, q^n)$ be local coordinates on the configuration space $Q$ and $u_i \in TQ$, $i = 1, \dots,n$, be smooth independent \emph{local} vector fields defined in the same coordinate neighborhood (in certain cases, some or all of $u_i$ can be chosen to be \emph{global} vector fields on $Q$).
The components of $u_i$ relative to the basis $\partial /\partial q^j$ will be denoted $\psi_i ^j$; that is,
\[
u_i(q) = \psi _i^j(q) \frac{\partial}{\partial q^j},
\]
where $i, j = 1,\dots,n$ and where summation on $j$ is understood by employing Einstein summation notation.

Let $v = (v^1,\dots, v^n) \in \mathbb{R}^n$ be the components of the velocity vector $\dot q \in TQ$ relative to the basis $u_1, \dots, u_n$, \emph{i.e.}, 
\begin{equation}\label{noncommuting.variables.eqn}
\dot q = v^i u_i(q);
\end{equation}
then
\begin{equation}\label{l_u.eqn}
\ell(q, v) := L(q, v^i u_i (q))
\end{equation}
is the Lagrangian of the system written in the adapted coordinates $(q, v)$ on the tangent bundle $TQ$. The coordinates $(q, v)$ are Lagrangian analogues of non-canonical variables in Hamiltonian dynamics. 

Define the quantities $c_{ij}^m (q)$ by the equations
\begin{equation}\label{ccoeff.eqn}
[u_i (q),u_j (q)] = c^m_{ij} (q) u_m (q),
\end{equation} where 
$i, j, m = 1, \dots, n$. These quantities vanish if and only if the vector fields $u_i (q)$, $i = 1,\dots, n$, commute. Here and elsewhere, $[\cdot,\! \cdot]:\mathbb{R}^{m}\times\mathbb{R}^{m}\to\mathbb{R}^{m}$ is the Jacobi--Lie bracket of vector fields on $Q$.  Also one can find that $$c_{ij}^{m}=(\psi^{-1})_{k}^{m}\left(\frac{\partial\psi_{j}^{k}}{\partial q^{l}}\psi_{i}^{l}-\frac{\partial\psi_{i}^{k}}{\partial q^{l}}\psi_{j}^{l}\right).$$ The dual of $[\cdot,\cdot]_{q}$ is defined by the operation $[\cdot,\cdot]_{q}^{*}:V_q\times V_q^{*}\to V^{*}_{q}$ given by $$\langle[v,\alpha]_{q}^{*},w\rangle\equiv\langle ad_{v}^{*}\alpha,w\rangle:=\langle\alpha,[v,w]_{q}\rangle$$ where $V_{q}$ is the Lie algebra given by $V_{q}=(\mathbb{R}^{m},[\cdot,\cdot]_{q})$. Here $ad^{*}$ is the dual of  the usual \textit{ad} operator in a Lie algebra.

Viewing $u_i$ as vector fields on $TQ$ whose fiber components equal $0$ (that is, taking the vertical lift of these vector fields), one defines the directional derivatives $u_i [\ell]$ for a function $\ell :TQ\to \mathbb{R}$ by the formula 
\[
u_i [\ell] = \psi_i^j \frac{\partial \ell}{\partial q^j}.
\]

The evolution of the variables $(q,v)$ is governed by the \textit{Hamel equations} 
\begin{equation}\label{hamel.eqn}
\frac{d}{dt} \frac{\partial \ell}{\partial v^j}
= c^m_{ij}v^i \frac{\partial \ell}{\partial v^m} + u_j[\ell],
\end{equation} 
coupled with equations \eqref{noncommuting.variables.eqn}.
If $u_i = \partial / \partial q^i$, equations \eqref{hamel.eqn} become the Euler--Lagrange equations. Equations \eqref{hamel.eqn} were introduced in 
\cite{Ha1904} (see also \cite{Neimark} for details and some history).  Hamel equations can be written as $$\frac{d}{dt}\frac{\partial\ell}{\partial v}=\left[v,\frac{\partial\ell}{\partial v}\right]_{q}^{*}+u[\ell]\equiv ad_{v}^{*}\frac{\partial\ell}{\partial v}+u[\ell]$$ coupled with the equation $\dot{q}=v^{i}u_{i}(q)$.

\subsection{The jump equations in quasivelocities}

To obtain the jump equations in terms of quasivelocities, we generalize an extended variational principle derived in \cite{alex} for nonholonomic systems, which in turn is the nonholonomic version of the variational principle introduced in \cite{Fetecau} to obtain the equations satisfied by a system without constraints after a collision with a smooth submanifold in the configuration space.

\begin{theorem}
    Let $q:[0,h]\rightarrow Q$ and $v:[0,h]\rightarrow TQ$ be trajectories of the Hamel's equations for the Lagrangian function $\ell:TQ\rightarrow \mathbb{R}
    $ subjected to the inequality constraint $q(t)\in C$. Suppose that this system has an impact against the boundary $\partial C$ at the time $t_{k}\in [0,h]$. Then the trajectory satisfies Hamel's equations \eqref{hamel.eqn} in the intervals $[0, t_{k}^{-}[$ and $]t_{k}^{+},h]$ and at the impact time $t_{k}$, the following conditions hold:
    \begin{equation}\label{nh:jump:equations}
        \begin{split}
            & \frac{\partial\ell}{\partial v}|_{t=t_{k}^{+}} - \frac{\partial\ell}{\partial v}|_{t=t_{k}^{-}} \in -N_{C}, \\
            & E_{\ell}|_{t=t_{k}^{+}} = E_{\ell}|_{t=t_{k}^{+}},
        \end{split}
    \end{equation}
    where $E_{\ell}:TQ \rightarrow \mathbb{R}$ is the energy of the system given in local coordinates by $E_{\ell}(q,v)=\frac{\partial\ell}{\partial v^{i}} v^i - \ell (q,v)$.
\end{theorem}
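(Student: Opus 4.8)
The plan is to derive both conclusions from a single constrained variational principle with a free impact time, generalizing the extended principle of \cite{alex} (itself the nonholonomic version of \cite{Fetecau}) to the quasivelocity setting. First I would write the action as a sum of two pieces split at the impact, $S = \int_0^{t_k}\ell(q,v)\,dt + \int_{t_k}^h \ell(q,v)\,dt$, and vary over curves that remain in $C$, have fixed endpoints $q(0),q(h)$, and touch $\partial C$ at the variable time $t_k$, i.e. $g(q(t_k))=0$. The essential point distinguishing this from \cite{alex} is that the admissible variations must be expressed through the quasivelocities: writing the variation field as $\delta q = w^i u_i(q)$, the induced variation of the velocity components obeys the Hamel constrained-variation relation $\delta v^m = \dot w^m + c^m_{ij} v^i w^j$, equivalently $\delta v = \dot w + [v,w]_q$, which is exactly where the structure constants $c^m_{ij}$ of \eqref{ccoeff.eqn} enter.

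Next I would compute $\delta S$ on each subinterval. Using $\frac{\partial \ell}{\partial q}\delta q = u_i[\ell]\,w^i$, inserting the constrained variation of $v$, and integrating the term $\frac{\partial\ell}{\partial v^m}\dot w^m$ by parts, the interior integrand becomes $\big(u_j[\ell] + c^m_{ij} v^i \frac{\partial\ell}{\partial v^m} - \frac{d}{dt}\frac{\partial\ell}{\partial v^j}\big)w^j$; requiring this to vanish for arbitrary interior $w$ recovers Hamel's equations \eqref{hamel.eqn} on $[0,t_k^-[$ and $]t_k^+,h]$. The integration by parts also leaves a boundary term $\frac{\partial\ell}{\partial v}\cdot w$ at each endpoint, while differentiating the variable limit of integration produces a term $\ell\,\delta t_k$. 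I would then repackage these using the total variation of the impact point $\Delta P = \delta q(t_k) + \dot q(t_k)\,\delta t_k$ and the identity $E_\ell = \frac{\partial\ell}{\partial v^j}v^j - \ell$: with $\Delta P \in T_{q(t_k)}\partial C$ treated as the shared free datum, the left piece contributes $\frac{\partial\ell}{\partial v}|_{t_k^-}\cdot\Delta P - E_\ell|_{t_k^-}\delta t_k$ and the right piece $-\frac{\partial\ell}{\partial v}|_{t_k^+}\cdot\Delta P + E_\ell|_{t_k^+}\delta t_k$, the endpoint terms at $0$ and $h$ vanishing by the fixed-endpoint conditions.

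Finally I would read off the two conditions from the independence of the remaining free data. After the interior Hamel equations are imposed, stationarity of $S$ reduces to $\big(\frac{\partial\ell}{\partial v}|_{t_k^-} - \frac{\partial\ell}{\partial v}|_{t_k^+}\big)\cdot \Delta P + \big(E_\ell|_{t_k^+} - E_\ell|_{t_k^-}\big)\delta t_k = 0$. Since the impact point stays on $\partial C$, $\Delta P$ ranges over $T_{q(t_k)}\partial C$; requiring the coefficient of $\Delta P$ to vanish on this hyperplane forces $\frac{\partial\ell}{\partial v}|_{t_k^+} - \frac{\partial\ell}{\partial v}|_{t_k^-}$ to be conormal to $\partial C$, hence proportional to $dg$, and the one-sided (inequality) nature of $g\leq 0$ fixes the sign of the multiplier so that the jump lies in $-N_C$ — this is the quantitative origin of the negative sign noted in the Remark. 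Independence of the free impact time $\delta t_k$ then yields $E_\ell|_{t_k^+} = E_\ell|_{t_k^-}$. The step I expect to be the main obstacle is the bookkeeping of the moving impact point across the velocity discontinuity: naively fixing the vertical variation while varying $t_k$ produces a Lagrangian difference rather than an energy difference, and only the total-variation $\Delta P$ formulation, together with a careful one-sided treatment of the constraint, delivers simultaneously the energy balance and the correct normal-cone sign.
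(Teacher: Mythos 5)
Your proposal is correct and follows essentially the same route as the paper's own proof: the action split at the impact time, the Hamel constrained variations $\delta v^k = \dot w^k + c^k_{ij}v^i w^j$, integration by parts yielding the interior Hamel equations plus the boundary term $\bigl[\frac{\partial\ell}{\partial v^j}w^j + \ell\,\delta t_k\bigr]_{t_k^-}^{t_k^+}$, and the tangency condition $\delta q(t_k)+\dot q(t_k)\delta t_k\in T(\partial C)$ used to extract the normal-cone and energy conditions. Your repackaging via the total variation $\Delta P$ is just the paper's choice of spanning variations ($\delta t_k=0$ with $w^iu_i\in T(\partial C)$, versus $\delta t_k=1$ with $w^i=-v^i$) written in a coordinate-free way, and your explicit discussion of the sign of the multiplier is a welcome addition the paper only treats in a remark.
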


\begin{proof} The curve $(q(t), v(t))$ is a critical point of the functional 
\begin{equation} \label{hamel_action.eqn}
\int_a^b \ell(q,v)\,dt
\end{equation} 
with respect to variations $\delta v$, induced by the variations $\delta q =  w^i u_i (q)$, and given by (see  \cite{quasi})
\begin{equation}\label{xi_var.eqn}
\delta v^k = \dot w^k + c_{ij}^k (q)v^i w^j.
\end{equation}
So, 
\begin{align*}&\delta \int_a^{t_{k}^{-}} \ell(q,v)\,dt +  \delta \int_{t_{k}^{+}}^{b} \ell(q,v)\,dt\\& = \int_a^{t_{k}^{-}} \left( c_{ij}^k v^{i}\frac{\partial\ell}{\partial v^{k}}+\psi^{i}_{j}\frac{\partial\ell}{\partial q^{i}} - \frac{d}{dt}\frac{\partial\ell}{\partial v^{j}}\right) w^j \ dt\\
&+ \int_{t_{k}^{+}}^{b} \left( c_{ij}^k v^{i}\frac{\partial\ell}{\partial v^{k}}+\psi^{i}_{j}\frac{\partial\ell}{\partial q^{i}} - \frac{d}{dt}\frac{\partial\ell}{\partial v^{j}}\right) w^j \ dt\\& - \left[ \frac{\partial\ell}{\partial v^{j}} w^j + \ell \delta t_{k}\right]_{t_{k}^{-}}^{t_{k}^{+}}\end{align*}

The jump condition follows from the fact that $q(t_{k})\in \partial C$ from where
$$\delta (q(t_{k})) \in T(\partial C) \implies \delta q (t_{k}) + \dot{q}(t_{k})\delta t_{k} \in T(\partial C).$$
In quasivelocities this condition becomes
$$w^i(t_{k}) u_i (q(t_{k})) + v^i(t_{k}) u_i(q(t_{k})) \delta t_{k} \in T(\partial C)$$

The variations satisfying the previous equation are spanned by variations $w^{i} (t_{k})u_{i}(q(t_{k})) \in T(\partial C)$ and $\delta t_{k} = 0$ or $\delta t_{k} = 1$ and $w^{i} (t_{k}) = - v^{i}(t_{k})$. From the latter we immediately deduce that
$$\left[ \frac{\partial\ell}{\partial v^{i}} v^i - \ell \right]_{t_{k}^{-}}^{t_{k}^{+}} = 0,$$
which is the energy conservation condition in the jump equations. From $\delta t_{k}=0$, we get that 
$$\frac{\partial\ell}{\partial v}|_{t=t_{k}^{+}} - \frac{\partial\ell}{\partial v}|_{t=t_{k}^{-}},$$
annihilates $\delta q = w^i u_i (q) \in T(\partial C)$.\end{proof}

\section{Nonholonomic systems with inequality constraints}\label{sec4}

Assume that there are velocity constraints imposed on the system. We will restrict to constraints that are linear in the velocities. Consider a distribution $\mathcal{D}$ on the configuration space $Q$ describing these constraints, that is, $\mathcal{D}$ is a collection of linear subspaces of $TQ$ ($\mathcal{D}_q\subset T_{q}Q$ for each $q\in Q$). A curve $q(t)\in Q$ will be said to satisfy the constraints if $\dot{q}(t)\in\mathcal{D}_{q(t)}$ for all $t$. Locally, the constraint distribution can be written as
$$\mathcal{D}=\{\dot{q}\in TQ\,|\,\mu_{i}^{a}(q)\dot{q}^{i}=0,\,\,a=1,\ldots,m\}.$$


The Lagrange-d'Alembert equations of motion for the system are those determined by $\delta\int_{a}^{b}L(q,\dot{q})dt=0,$ where we choose variations $\delta q(t)$ of the curve $q(t)$ that satisfy $\delta q(a)=\delta(b)=0$ and $\delta q(t)\in\mathcal{D}_{q(t)}$ for each $t\in[a,b]$. Note that here the curve $q(t)$ itself satisfies the constraints. Variations are taken before imposing the constraints and hence, the constraints are not imposed on the family of curves defining the variations.



The nonholonomic equations of motion are obtained from Lagrange-d'Alembert principle and its local expression is
\begin{equation}\label{LdA:eq}
    \begin{split}
        & \frac{d}{dt}\frac{\partial L}{\partial \dot{q}^{i}} - \frac{\partial L}{\partial q^{i}}= \lambda_{a}\mu^{a}_{i}, \,\,\quad \mu_{i}^{a}(q)\dot{q}^{i}=0
    \end{split}
\end{equation}
where $\lambda_{a}$ is a Lagrange multiplier that might be computed using the constraints.




\subsection{Constrained Hamel's equations}
Consider a nonholonomic system determined by a Lagrangian function $L:TQ\to\mathbb{R}$ and a constraint distribution $\mathcal{D}$. Let $\{u_{1}, \dots, u_{n}\}$ be a local basis of vector fields in $Q$ such that $\mathcal{D}_q=\hbox{span}\{u_1(q),\ldots,u_{k}(q)\}$ with $k = n-m$. Each tangent vector $\dot{q}\in TQ$ can be decomposed as 
$$\dot{q}=\sum_{i=1}^{k} v^{i} u_{i} + \sum_{i=k+1}^{n} v^{i} u_{i}$$
where $\displaystyle{\sum_{i=1}^{k} v^{i} u_{i}}$ is the component of $\dot{q}$ along $\mathcal{D}_q$. We will conveniently denote the first term by $\langle u(q),\dot{q}^{\mathcal{D}}\rangle$ and the second by $\langle u(q),\dot{q}^{\mathcal{U}}\rangle.$

Similarly, each $\alpha\in T^{*}Q$ can be uniquely decomposed as $$\alpha=\langle\alpha_{\mathcal{D}},u^{*}(q)\rangle+\langle \alpha_{\mathcal{U}},u^{*}(q)\rangle,$$ where $\langle\alpha_{\mathcal{D}},u^{*}(q)\rangle$ is the component of $\alpha$ along the dual of $\mathcal{D}_q$ and $u^{*}(q)$ denotes the dual frame of $u(q)$. In particular, the annihilator of $\mathcal{D}$, denoted by $\mathcal{D}^{o}$, is generated by $\{u_{k+1}^{*}, \dots, u_{n}^{*}\}$.

Hence, any vector $v\in \mathcal{D}_{q}$ can be written as 
$$v=\langle u(q),v^{\mathcal{D}}\rangle\hbox{  or  } 0=\langle u(q),v^{\mathcal{U}}\rangle.$$

Now, the nonholonomic system can also be obtained from  the constrained Hamel's equations. 
Letting $\ell(q, v)=L(q, v^{i}u_{i})$ be the local expression of the Lagrangian function with respect to coordinates adapted to the local basis $\{u_{i}\}$, these equations are (locally) given by $$\frac{d}{dt}\frac{\partial\ell}{\partial v^{i}}=c_{ji}^{m}\frac{\partial\ell}{\partial v^{m}}v^{j}+u_{i}[\ell],\quad \dot{q}=v^{i}u_{i}(q),\quad i,j=1,\ldots,k,$$  $v^{a}=0, \ a=k+1,...,n. $

\subsection{Nonholonomic systems with inequality constraints}

If $C$ is an inequality constraint on the nonholonomic system, then Lagrange-d'Alembert equations are still valid in the interior of $C$. However, the jump conditions must now be changed to accommodate the constraints our system has on velocities.

\begin{theorem}
    Let $q:[0,h]\rightarrow Q$ be a nonholonomic trajectory of the nonholonomic system $(\ell,\mathcal{D})$ subjected to the inequality constraint $q(t)\in C$. Suppose that this system has an impact against the boundary $\partial C$ at the time $t_{i}\in [0,h]$. Then the trajectory satisfies Lagrange-d'Alembert equations \eqref{LdA:eq} in the intervals $[0, t_{i}^{-}[$ and $]t_{i}^{+},h]$ and at the impact time $t_{i}$, the following conditions hold:
    \begin{equation}\label{nh:jump:equations}
        \begin{split}
            & \frac{\partial\ell}{\partial v}|_{t=t_{k}^{+}} - \frac{\partial\ell}{\partial v}|_{t=t_{k}^{-}} = \lambda^{a}u_{a}^{*}(q) + \lambda^{0}dg(q) \\
            & E_{\ell}|_{t=t_{i}^{+}} = E_{\ell}|_{t=t_{i}^{+}} \\
            & \dot{q}(t_{i}^{+}) \in \mathcal{D}_{q(t_{i}^{+})},
        \end{split}
    \end{equation}
    with $a=k+1, \dots, n$ and $\lambda^{0}$, $\lambda^{a}$ are Lagrange multipliers to be determined when solving the jump equations.
\end{theorem}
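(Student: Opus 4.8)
The plan is to reproduce the argument of the preceding theorem, but now constraining all admissible variations to lie in the nonholonomic distribution $\mathcal{D}$, as dictated by the Lagrange--d'Alembert principle. First I would consider the action $\int_0^h \ell(q,v)\,dt$ broken at the impact time as $\int_0^{t_i^-}\ell\,dt + \int_{t_i^+}^h \ell\,dt$, and take variations of the form $\delta q = w^i u_i(q)$ subject to $w^a = 0$ for $a = k+1,\dots,n$, so that $\delta q \in \mathcal{D}_q$, which induce the quasivelocity variation $\delta v^k = \dot w^k + c_{ij}^k(q) v^i w^j$ exactly as in \eqref{xi_var.eqn}. Integrating by parts on each subinterval reproduces the integrand $\big( c_{ij}^k v^i \partial\ell/\partial v^k + \psi_j^i \partial\ell/\partial q^i - \tfrac{d}{dt}\,\partial\ell/\partial v^j\big)w^j$; requiring it to vanish for all free $w^j$ with $j = 1,\dots,k$ yields the constrained Hamel equations, equivalently the Lagrange--d'Alembert equations \eqref{LdA:eq}, on $[0,t_i^-[$ and $]t_i^+,h]$, and leaves the boundary contribution $-\big[\partial\ell/\partial v^j\, w^j + \ell\,\delta t_i\big]_{t_i^-}^{t_i^+}$.

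The second step is to extract the jump conditions from this boundary term by testing it against the admissible families of endpoint variations. Keeping $q(t_i)\in\partial C$ forces $\delta q(t_i) + \dot q(t_i)\,\delta t_i \in T(\partial C)$, while the Lagrange--d'Alembert restriction forces $\delta q(t_i)\in\mathcal{D}_{q(t_i)}$. Taking $\delta t_i = 0$ with $\delta q(t_i) = w^i u_i \in T(\partial C)\cap\mathcal{D}_{q(t_i)}$ reduces the boundary term to $\big(\partial\ell/\partial v|_{t_i^+} - \partial\ell/\partial v|_{t_i^-}\big)\!\cdot w = 0$, so the momentum jump must annihilate $T(\partial C)\cap\mathcal{D}_{q(t_i)}$. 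Taking instead $\delta t_i = 1$ and $w^i(t_i) = -v^i(t_i)$ --- admissible because $v^a=0$ for $a>k$ and because the tangency condition then becomes the trivial $0\in T(\partial C)$ --- collapses the boundary term to $\big[\partial\ell/\partial v^i\, v^i - \ell\big]_{t_i^-}^{t_i^+}=0$, which is precisely energy conservation $E_\ell|_{t_i^+}=E_\ell|_{t_i^-}$.

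It remains to turn the annihilation statement into the explicit representation in the theorem. I would invoke the linear-algebra identity $(A\cap B)^o = A^o + B^o$ for subspaces of $T^*_{q}Q$: since $T(\partial C)=\ker dg(q)$ has annihilator $\spn\{dg(q)\}$ and $\mathcal{D}_q$ has annihilator $\mathcal{D}^o=\spn\{u_{k+1}^*(q),\dots,u_n^*(q)\}$, the annihilator of $T(\partial C)\cap\mathcal{D}_q$ is $\spn\{dg(q),u_{k+1}^*(q),\dots,u_n^*(q)\}$. Hence the momentum jump is expressible as $\lambda^a u_a^*(q) + \lambda^0\,dg(q)$ with $a=k+1,\dots,n$, as claimed. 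The remaining condition $\dot q(t_i^+)\in\mathcal{D}_{q(t_i^+)}$ is imposed separately: it records that the post-impact motion must again be a nonholonomic trajectory, and supplies the extra relations needed to fix the multipliers.

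I expect the main obstacle to be the variational bookkeeping together with the annihilator step. One must check that the two constraints --- boundary tangency and membership in $\mathcal{D}$ --- are imposed simultaneously and consistently on $\delta q(t_i)$, that the reparametrization family $w=-v$ genuinely respects $\delta q\in\mathcal{D}$, and that the representation $\spn\{dg\}+\mathcal{D}^o$ is nonredundant. The last point requires transversality of the boundary to the distribution, $dg(q)\notin\mathcal{D}^o$; granting it, a dimension count ($\dim(T(\partial C)\cap\mathcal{D})=k-1$, annihilator of dimension $m+1$) shows that the multipliers $\lambda^0,\lambda^{k+1},\dots,\lambda^n$ are exactly determined.
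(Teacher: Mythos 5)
Your proposal follows essentially the same route as the paper's own proof: break the action at the impact time, restrict variations to $\delta q = w^i u_i \in \mathcal{D}$, recover the constrained Hamel equations on the subintervals, and read off energy conservation from the $\delta t_i = 1$, $w = -v$ family and the momentum-jump condition from the $\delta t_i = 0$ family with $\delta q \in T(\partial C)\cap\mathcal{D}$. Your explicit use of the identity $\big(T(\partial C)\cap\mathcal{D}\big)^o = \spn\{dg\} + \mathcal{D}^o$, together with the transversality caveat $dg(q)\notin\mathcal{D}^o$, actually tightens the paper's looser ``either in $N_C$ or in $\mathcal{D}^o$'' phrasing, but it is the same argument.
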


\begin{proof}
    The Lagrange-d'Alembert principle for systems with impacts is defined on the path space
    $$\Omega = \{ (c,t_{i}) \ | \ c:[0,h]\rightarrow Q \text{ is a smooth curve and } t_{i}\in \mathbb{R}\}.$$

    If the mapping $\mathcal{A}:\Omega \rightarrow \mathbb{R}$ is the action then, the Lagrange,d'Alembert principle states that the derivative of the action should annihilate all variations $(\delta q, \delta t_{i})$ with $\delta q \in \mathcal{D}$, i.e., $\delta q = \sum_{i=1}^{k} w^{i} u_{i}$. Since,
\begin{equation*}
    \begin{split}
        \delta \mathcal{A} = & \int_{0}^{t_{i}^{-}} \left[  c_{ij}^k v^{i}\frac{\partial\ell}{\partial v^{k}}+\psi^{i}_{j}\frac{\partial\ell}{\partial q^{i}} - \frac{d}{dt}\frac{\partial\ell}{\partial v^{j}}\right] w^j \ dt \\&+\int_{t_{i}^{+}}^{h} \left[ c_{ij}^k v^{i}\frac{\partial\ell}{\partial v^{k}}+\psi^{i}_{j}\frac{\partial\ell}{\partial q^{i}} - \frac{d}{dt}\frac{\partial\ell}{\partial v^{j}}\right] w^j \ dt \\
         & - \left[ \frac{\partial\ell}{\partial v^{j}} w^j + \ell \delta t_{k}\right]_{t_{i}^{-}}^{t_{i}^{+}}
    \end{split}
\end{equation*}
the fact that constrained Hamel's equations hold on the intervals $[0, t_{i}^{-}[$ and $]t_{i}^{+},h]$ follows from the application of the fundamental theorem of calculus of variations together with the fact that $\delta q \in \mathcal{D}$. 

The jump condition follows from the fact that $q(t_{k})\in \partial C$ from where

$$w^i(t_{k}) u_i (q(t_{k})) + v^i(t_{k}) u_i(q(t_{k})) \delta t_{k} \in T(\partial C)$$

The variations satisfying the previous equation are spanned by variations $\delta q (t_{i}) \in T(\partial C)$ and $\delta t_{i} = 0$ or $\delta t_{i} = 1$ and $\delta q (t_{i}) = - \dot{q}(t_{i})$. From the latter we immediately deduce that
$$\left[ \frac{\partial\ell}{\partial v} v^i - \ell\right]_{t_{i}^{-}}^{t_{i}^{+}} = 0,$$
which is the energy conservation condition in the jump equations. From $\delta t_{i}=0$, we get that 
$$\frac{\partial\ell}{\partial v}|_{t=t_{k}^{+}} - \frac{\partial\ell}{\partial v}|_{t=t_{k}^{-}},$$
annihilates $\delta q$ if either it is on the normal cone $N_{C}$ or it belongs to the annihilator of the distribution $\mathcal{D}$, since $\delta q$ is in $T(\partial C)\cap \mathcal{D}$. Hence,
$$\frac{\partial\ell}{\partial v}|_{t=t_{k}^{+}} - \frac{\partial\ell}{\partial v}|_{t=t_{k}^{-}} = \lambda^{a}u_{a}^{*}(q) + \lambda^{0}dg(q)$$
where $a=k+1, \dots, n$, $\lambda^{0}$ and $\lambda^{a}$ are Lagrange multipliers to be determined when solving the jump equations. This is precisely the first jump equation. The third one follows from the nonholonomic constraints.
\end{proof}


\section{The Chaplygin sleigh knife edge hitting a boundary}\label{sec5}

The Chaplygin system is a celebrated example of a nonholonomic system. Here we considered the Chaplygin system with knife edge planar coordinates determined by $(x, y)$ and orientation given by $\theta$ whose center of mass coordinates coincide with those from the blade. Under these circumstances, the dynamics is given by the Lagrangian function
\begin{equation*}
    L=\frac{m}{2}\left( \dot{x}^2 + \dot{y}^2\right) + \frac{I}{2}\dot{\theta}^2
\end{equation*}
together with the constraint $\sin \theta \dot{x} = \cos \theta \dot{y}$ generating the distribution
$$\mathcal{D}= \left\langle \left\{ 
\cos \theta \frac{\partial}{\partial x} + \sin \theta \frac{\partial}{\partial y}, \frac{\partial}{\partial \theta} \right\} \right\rangle.$$
Consider the local basis of vector field determined by
$$u_{1}=\cos \theta \frac{\partial}{\partial x} + \sin \theta \frac{\partial}{\partial y}, u_{2} = \frac{\partial}{\partial \theta}$$
$$u_{3} = -\sin \theta \frac{\partial}{\partial x} + \cos \theta \frac{\partial}{\partial y}.$$
The relevant structure functions appearing on Hamel's equations are given by
$[u_{1}, u_{2}] = -u_{3},$
implying $c_{12}^{1}=c_{12}^{2}=0$ and $c_{12}^{3}=-1$.

The Lagrangian function with respect to coordinates adapted to this local frame for $TQ$ takes the expression
$$\ell (q, v) = \frac{m}{2} ((v^{1})^2 + (v^{3})^2) + \frac{I}{2} (v^{2})^2.$$
Therefore, constrained Hamel's equations give
\begin{equation*}
    \begin{split}
         m \dot{v}^{1} &= 0\\
         I \dot{v}^{2} &=  0\\
         v^{3} &= 0 \\
         \dot{q} &= v^{1}u_{1} + v^{2}u_{2}
    \end{split}
\end{equation*}
We will examine Hamel's constrained equations when the knife edge impacts the boundary of the inequality constraint
$$C = \{ (x,y,\theta) \ | \ x^{2} + y^{2} \leqslant 1 \}$$
The jump equations \eqref{nh:jump:equations} at a boundary point $(x, y, \theta) \in \partial C$ are now
\begin{equation*}
    \begin{split}
        m (v^{1, +} - v^{1, -}) = & \lambda_{0} (2x\cos\theta + 2y\sin\theta)\\
        I (v^{2, +} - v^{2, -}) = & 0 \\
        m (v^{3, +} - v^{3, -}) = &  \lambda_{3} + \lambda_{0} (-2x\sin\theta + 2y\cos\theta)  \\
        \frac{m}{2} (v^{1, +})^{2} + \frac{I}{2}(v^{2, +})^{2} = &  \frac{m}{2}(v^{1, -})^{2} + \frac{I}{2}(v^{2, -})^{2} \\
        v^{3, +} = & 0
    \end{split}
\end{equation*}
We may elimate the third and fifth equations so that we end up with the system
\begin{equation*}
    \begin{split}
        m (v^{1, +} - v^{1, -}) = & \lambda_{0} (2x\cos\theta + 2y\sin\theta)\\
        I (v^{2, +} - v^{2, -}) = & 0 \\
        \frac{m}{2} (v^{1, +})^{2} + \frac{I}{2}(v^{2, +})^{2} = &  \frac{m}{2}(v^{1, -})^{2} + \frac{I}{2}(v^{2, -})^{2},
    \end{split}
\end{equation*}
whose admissible solution is $v^{1,+}=-v^{1,-}$ and $v^{2,+}=v^{2,-}$.

Below, we simulate Chaplygin system under this inequality constraint for 400 seconds, using $N=4000$ steps and a time-step of $h=0.1$ (see Figure \ref{fig:my_label}). The exact solution of Hamel's equations is known in this case and it was used to draw the motion. We used physical constant $m=I=1$ and initial points $q_{0}=(0, 0, \pi/2)$ and $v_{0}= (0.1, 0.05)$. We can observe how the velocity $v_{1}(t)$ jumps at each impact with the boundary (Figure \ref{fig:my_label2}) and preservation of energy (Figure \ref{fig:my_label3}).
\begin{figure}[htb!]
    \centering
    \includegraphics[scale=0.5]{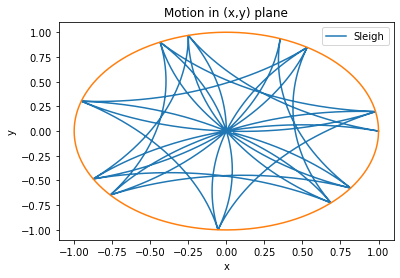}
    \caption{Chaplygin sleigh hitting a circular wall}
    \label{fig:my_label}
\end{figure}

\begin{figure}[htb!]
    \centering
    \includegraphics[scale=0.5]{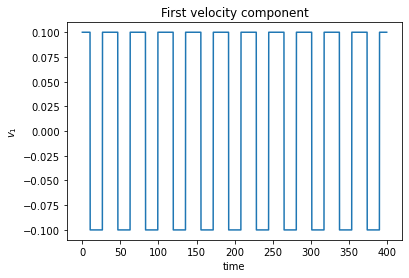}
    \caption{Velocity component $v^{1}(t)$ from Chaplygin sleigh against time.}
    \label{fig:my_label2}
\end{figure}

\begin{figure}[htb!]
    \centering
    \includegraphics[scale=0.5]{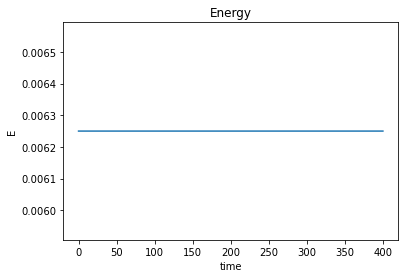}
    \caption{Energy from Chaplygin sleigh in time.}
    \label{fig:my_label3}
\end{figure}

\section{The vertical rolling disk hitting a wall perpendicularly}\label{sec6}


The vertical rolling disk is given by the Lagrangian function
\begin{equation*}
    L=\frac{m}{2}\left( \dot{x}^2 + \dot{y}^2\right) + \frac{I}{2}\dot{\theta}^2 + \frac{J}{2}\dot{\varphi}^2
\end{equation*}
together with the non-slipping constraints $\dot{x}=R\dot{\theta}\cos \varphi$, $\dot{y}=R\dot{\theta}\sin \varphi$ generating the distribution
$$\mathcal{D}= \left\langle \left\{ \frac{\partial}{\partial \theta} + R\cos \varphi \frac{\partial}{\partial x} + R \sin \varphi \frac{\partial}{\partial y}, \frac{\partial}{\partial \varphi} \right\} \right\rangle.$$

Consider the local basis of vector field determined by
$$u_{1}=\frac{\partial}{\partial \theta} + R\cos \varphi \frac{\partial}{\partial x} + R \sin \varphi \frac{\partial}{\partial y}, u_{2} = \frac{\partial}{\partial \varphi}$$

$$u_{3} = \frac{\partial}{\partial x} - R\cos \varphi \frac{\partial}{\partial \theta}, u_{4} = \frac{\partial}{\partial y} - R\sin \varphi \frac{\partial}{\partial \theta}$$

The relevant structure functions appearing on Hamel's equations are given by
$$[u_{1}, u_{2}] = R \sin \varphi u_{3} - R \cos \varphi u_{4},$$
implying $c_{12}^{1}=c_{12}^{2}=0$, $c_{12}^{3}=R \sin \varphi$ and $c_{12}^{4}= -R \cos \varphi$.

The Lagrangian function with respect to coordinates adapted to this local frame for $TQ$ takes the expression :
\begin{equation*}
    \begin{split}
        \ell (q, v) = & \frac{m}{2}[(R\cos\varphi v^{1} + v^{3})^{2} + (R\sin\varphi v^{1} + v^{4})^{2}]\\
        & + \frac{I}{2}(v^{1} - R\cos\varphi v^{3} - R \sin\varphi v^{4})^{2}\\
        &+ \frac{J}{2}(v^{2})^{2} 
    \end{split}
\end{equation*}

Taking into account that the partial derivatives evaluated at vectors in $\mathcal{D}$, i.e., at $v^{3}=v^{4}=0$ give
$$\frac{\partial l}{\partial \varphi} = 0,$$
as well as
$$\frac{\partial l}{\partial v^{3}} =   R (m - I ) v^1\cos\varphi $$
and
$$\frac{\partial l}{\partial v^{4}} =  R (m - I ) v^1\sin\varphi $$

The constrained Hamel's equations that now give
\begin{equation*}
    \begin{split}
        & (mR^{2}+I) \dot{v}^{1} = -R \sin \varphi v^{2} \frac{\partial l}{\partial v^{3}} + R \cos \varphi v^{2} \frac{\partial l}{\partial v^{4}} \\
        & J \dot{v}^{2} =  R \sin \varphi v^{2} \frac{\partial l}{\partial v^{3}} - R \cos \varphi v^{2} \frac{\partial l}{\partial v^{4}} + \frac{\partial l}{\partial \varphi}\\
        & v^{3}=v^{4} = 0 \\
        & \dot{q} = v^{1}u_{1} + v^{2}u_{2}
    \end{split}
\end{equation*}
can be simplified, ending up with
\begin{equation*}
    \begin{split}
        & (mR^{2}+I) \dot{v}^{1} = 0 \\
        & J \dot{v}^{2} =  0\\
        & v^{3}=v^{4} = 0 \\
        & \dot{q} = v^{1}u_{1} + v^{2}u_{2}
    \end{split}
\end{equation*}

We will examine Hamel's constrained equations when the disk impacts the boundary of the inequality constraint
$$C = \{ (x,y,\theta,\varphi) \ | \ y + R \sin \varphi \leqslant 10 \}$$
at a constant angle $\varphi = \frac{\pi}{2}$, where the disk makes a right angle with the wall.

In this case, the jump equations \eqref{nh:jump:equations} are simply
\begin{equation*}
    \begin{split}
        (mR^{2} + I) (v^{1, +} - v^{1, -}) = &  \lambda^{0}R(\sin \varphi + \cos\varphi) \\
    J (v^{2, +} - v^{2, -}) = & 0 \\
    R (m - I ) \cos\varphi (v^{1,+} - v^{1, -}) = & \lambda^{3} -
    \lambda^{0} 2 R^{2} \cos^{2}\varphi \\
    R (m - I ) \sin\varphi (v^{1,+} - v^{1, -}) = & \lambda^{4} +
    \lambda^{0} (1 - 2 R^{2} \cos\varphi\sin\varphi) \\
    \frac{mR^{2} + I}{2} & (v^{1, +})^{2} + \frac{J}{2}(v^{2, -})^{2} \\
     = &  \frac{mR^{2} + I}{2}(v^{1, -})^{2} + \frac{J}{2}(v^{2, -})^{2} \\
     v^{3, +} = v^{4, +} = & 0
    \end{split}
\end{equation*}
However, as before, we can eliminte third, fourth and sixth equations to end up simply with
\begin{equation*}
    \begin{split}
        (mR^{2} + I) (v^{1, +} - v^{1, -}) = &  \lambda^{0}R(\sin \varphi + \cos\varphi) \\
    J (v^{2, +} - v^{2, -}) = & 0 \\
    \frac{mR^{2} + I}{2} & (v^{1, +})^{2} + \frac{J}{2}(v^{2, -})^{2} \\
     = &  \frac{mR^{2} + I}{2}(v^{1, -})^{2} + \frac{J}{2}(v^{2, -})^{2} 
    \end{split}
\end{equation*}
In fact, if the impact is perpendicualr to the wall, i.e., $\varphi=\frac{\pi}{2}$, then the unique admissible solution to the jump equations is $v^{1,+}=-v^{1,-}$, $v^{2,+}=v^{2,-}$ and $\lambda_{0} = -\frac{2(mR^{2}+I)v^{1,-}}{R}$.

Finally, we simulate the system under this variational inequality for 18 seconds, using $N=180$ steps and a time-step of $h=0.1$ (see Figure \ref{fig:my_label4}). The exact solution of Hamel's equations is known in this case and it was used to draw the motion. We used physical constant $m=I=J=1$ and initial points $q_{0}=(0, 0, 0, \pi/2)$ and $v_{0}= (1, 0)$. Again, the first velocity component $v^{1}(t)$ is discontinuous (Figure \ref{fig:my_label5}) and energy is preserved along the motion (Figure \ref{fig:my_label6}).
\begin{figure}[htb!]
    \centering
    \includegraphics[scale=0.5]{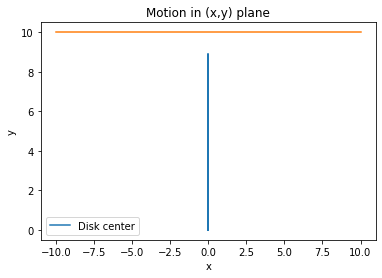}
    \caption{Vertical rolling disk hitting perpendicularly a wall}
    \label{fig:my_label4}
\end{figure}

\begin{figure}[htb!]
    \centering
    \includegraphics[scale=0.5]{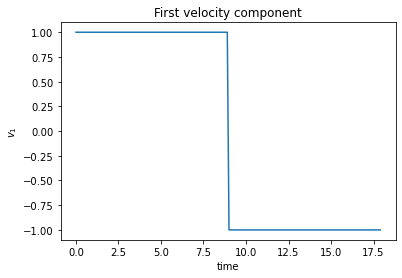}
    \caption{Velocity component $v^{1}(t)$ from rolling disk against time.}
    \label{fig:my_label5}
\end{figure}

\begin{figure}[htb!]
    \centering
    \includegraphics[scale=0.5]{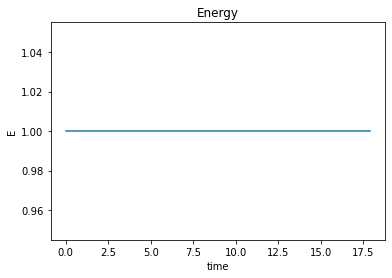}
    \caption{Energy from rolling disk in time.}
    \label{fig:my_label6}
\end{figure}


\end{document}